\theoremstyle{thmstyleone}%
\newtheorem{theorem}{Theorem}%  meant for continuous numbers
\newtheorem{lemma}[theorem]{Lemma}
\newtheorem{corollary}[theorem]{Corollary}
\theoremstyle{thmstyletwo}%
\theoremstyle{thmstylethree}%
\newtheorem{definition}{Definition}%
\DeclareMathOperator{\oracle}{oracle}
\DeclareMathOperator{\dom}{dom}
\DeclareMathOperator{\reg}{region}
\newcommand{\seq}{$s$}
\begin{document}

\title[An Efficient Algorithm for Exploring RNA Branching Conformations under the NNTM]{ An Efficient Algorithm for Exploring RNA Branching Conformations under the Nearest-Neighbor Thermodynamic Model}

%%=============================================================%%

\author[1]{\fnm{Svetlana} \sur{Poznanovi\'c}}\email{spoznan@clemson.edu}

\author[2]{\fnm{Owen} \sur{Cardwell}}\email{s39ocard@uni-bonn.de}

\author*[3]{\fnm{Christine} \sur{Heitsch}}\email{heitsch@math.gatech.edu}

\affil[1]{\orgdiv{School of Mathematical and Statistical Sciences}, \orgname{Clemson University}, \orgaddress{\city{Clemson}, \postcode{610101}, \state{SC}, \country{USA}}}

\affil[2]{\orgdiv{Department of Mathematics}, \orgname{University of Bonn}, \orgaddress{\city{Bonn}, \postcode{53115}, \country{Germany}}}

\affil*[3]{\orgdiv{School of Mathematics}, \orgname{Georgia Institute of Technology}, \orgaddress{\city{Atlanta}, \postcode{30332}, \state{GA}, \country{USA}}}

%%==================================%%
%% Sample for unstructured abstract %%
%%==================================%%

\abstract{\textbf{Background:} 
In the Nearest-Neighbor Thermodynamic Model, a standard approach for RNA secondary structure prediction, the energy of the multiloops is modeled using a linear entropic penalty governed by three branching parameters. Although these parameters are typically fixed, recent work has shown that reparametrizing the multiloop score and considering alternative branching conformations can lead to significantly better structure predictions. However, prior approaches for exploring the alternative branching structures were computationally inefficient for long sequences.
\textbf{Results:} We present a novel algorithm that partitions the parameter space, identifying all distinct branching structures (optimal under different branching parameters) for a given RNA sequence using the fewest possible minimum free energy computations. Our method efficiently computes the full parameter-space partition and the associated optimal structures, enabling a comprehensive evaluation of the structural landscape across parameter choices.
We apply this algorithm to the Archive II benchmarking dataset, assessing the maximum attainable prediction accuracy for each sequence under the reparameterized multiloop model. We find that the potential for improvement over default predictions is substantial in many cases, and that the optimal prediction accuracy is highly sensitive to auxiliary modeling decisions, such as the treatment of lonely base pairs and dangling ends.
\textbf{Conclusion:} 
Our results support the hypothesis that the conventional choice of multiloop parameters may limit prediction accuracy and that exploring alternative parameterizations is both tractable and worthwhile. The efficient partitioning algorithm we introduce makes this exploration feasible for longer sequences and larger datasets. Furthermore, we identify several open challenges in identifying the optimal structure.}

\keywords{RNA secondary structure, multiloops, NNTM} %Three to ten keywords representing the main content of the article

%%\pacs[JEL Classification]{D8, H51}

%%\pacs[MSC Classification]{35A01, 65L10, 65L12, 65L20, 65L70}

\maketitle

\section{Background}\label{intro}

RNA secondary structure prediction plays a central role in understanding RNA function and regulation~\cite{tinoco-bustamante-99, doudna2000structural}, and despite the emergence of new methods~\cite{schuster1997rna,major2001computational,gardner2004comprehensive,ding2006statistical, mathews-06, leontis2006building, shapiro2007bridging, flamm2008beyond, eddy2014computational, zhao2021review, saman2022rna, Szikszai2022}, experimentalists continue to rely on classical minimum free energy (MFE) predictions~\cite{mathews2006prediction} to extract valuable functional insights. The nearest-neighbor thermodynamic model (NNTM)~\cite{turner2010nndb}  provides a widely used approximation for calculating the folding free energy ($\Delta G$) of an RNA structure by decomposing it into smaller substructures -- such as base pair stacks and loops -- and assigning a score based on the model's  parameters. The total $\Delta G$ is computed as the sum of the contributions from all such components.

An important yet challenging aspect of the NNTM is the treatment of multiloops, which are substructures where three or more helices branch from a central loop. The score for multiloops consists of two parts: an initiation term, interpreted as an entropic penalty, and a stacking term that accounts for favorable intra-loop interactions. The NNTM models the initiation term/entropic penalty using a simple linear function parameterized by three \emph{branching parameters} -- $a$, $b$, $c$ -- representing an initiation penalty, a per-unpaired-nucleotide penalty, and a per-branching-helix penalty, respectively:

\[ \Delta G_{\mathrm{init}} = a + b \cdot [\text{number of unpaired nucleotides}] + c \cdot [\text{number of branching helices}].\]
The linear form of $\Delta G_{\mathrm{init}}$ was initially introduced for computational efficiency, but later comparison~\cite{ward2017advanced} with logarithmic penalty~\cite{mathews1999expanded} based on Jacobson-Stockmayer theory and one based on polymer theory~\cite{aalberts-nandagopal-10} concluded that ``the simplest model is the best''. 

The values of the branching parameters have changed considerably over time and have been (in kcal/mol):
\[T89 = (4.6, 0.4, 0.1) \hspace{1cm} T99 = (3.4, 0, 0.4) \hspace{1cm} T04 = (9.3, 0, -0.6).\]
Unlike many of the NNTM parameters which have been derived experimentally, the values of the T89~\cite{jaeger-turner-zuker-89} and T99~\cite{mathews1999expanded} parameters were learned, i.e. derived by maximizing the MFE prediction accuracy over a set of known structures,  and the T04 parameters were based on experimental data used to parametrize a different function which cannot be accommodated in a dynamic programming algorithm~\cite{mathews-etal-04}. Recent work has shown that reparametrizing the multiloop entropic penalty can yield a markedly improved minimum free energy (MFE) structure, suggesting that it is valuable to explore the resulting alternative structures~\cite{poznanovic2025can}.

Recent work has shown that reparametrizing the multiloop entropic penalty can yield a markedly improved minimum free energy (MFE) structure, but trying to find a ``better'' set of branching parameters leads to overfitting~\cite{bnb}. This suggests that instead of trying to change the model parameters, it is valuable to explore a larger set of alternative structures obtained by reparametrization.~\cite{poznanovic2025can}.

The parameter space used in~\cite{poznanovic2025can} to generate alternative structures was derived from a training set of tRNA and 5S rRNA sequences but was shown to include improved structures for the remaining eight RNA families in the benchmarking dataset. This indicated that generating multiple structures based on reparametrizing the branching initiation penalty can be a valuable source of alternative structures for improving secondary structure prediction.  However, several directions for future research in this direction remained unresolved: determining what constitutes an optimal search space, effects of making other modeling decisions which are standard options in current MFE computation software, and finding the most accurate structure in the structure ensemble.  Moreover, the algorithmic approach used to explore the parameter space was not computationally efficient, particularly for longer RNA sequences.

In this work, we address the computational limitations by introducing a new algorithm that is optimal in the number of MFE evaluations (oracle calls) required to compute the complete parameter-space partition for any given RNA sequence. This partition represents regions of the parameter space that yield distinct optimal secondary structures under different branching parameters. We describe two versions of the algorithm: one that operates with rational precision and recovers all non-zero area regions, and another that uses integer precision—suitable for interfacing with \verb+ViennaRNA+~\cite{lorenz-etal-11}. \verb+ViennaRNA+ allows for parameter adjustments, but scales them by a factor of 100 to integers, and performs MFE calculations in dekacal/mol.

We apply our method to the Archive II dataset and assess the achievable prediction accuracy for each sequence under any linear multiloop model. This was previously achieved for the tRNA and 5S RNA families, but was computationally prohibitive for longer sequences. Our results show that statistically significant improvements are possible for all RNA families with the standard options in \verb+ViennaRNA+, when compaired with the standard MFE and Boltzmann centroid structures. Further analysis reveal that the amount of improvement depends in non-obvious ways on auxiliary modeling choices, such as how lonely base pairs and dangling ends are treated.

These results underscore the value of continued development of this framework,  but the problem of how to identify the most accurate structure remains. In working towards solving the problem of how to find the 'needle in the haystack', we identified some problems to be addressed, including (a) density of target structure pairings, (b) treatment of pseudoknots, (c) relevance of coaxial stacking and other more sophisticated branching energy models, (d) pairings which are \textit{close} to the target helix/stem, but disjoint from it (even under +/- 1 nt skew), etc.

\section{Preliminaries} \label{prelims}

\subsection{Branching Signatures} 

An RNA secondary structure is a set of complementary base pairings (hydrogen bonds) formed within a single-stranded RNA molecule, which determine its shape in two dimensions. The main structure elements are helices (regions of consecutive base pairs), hairpin loops (single-stranded loop at the end of a helix), internal loops (unpaired bases between two helices),
bulges (unpaired bases on one side of a helix), and multiloops (unpaired regions connecting three or more helices).

When the non-branching parameters  in the model are fixed, the total folding free energy change of a structure $S$ is given by
\begin{equation}\Delta G(S) = ax + by + cz + w, \end{equation}
where  $(a,b,c)$ are the branching parameters and $x$, $y$, and $z$ are the total number, respectively, of multiloops as well as of single-stranded bases and of helices in those loops, and $w$ is the residual free energy from all the other structural components.
We refer to the quadruple $(x,y,z,w)$ as the \emph{branching signature} or sometimes just the \emph{signature} of the structure. Given a sequence \seq, we use the term \emph{signatures of} $s$ to refer to the set of the signatures of all the secondary structures for \seq. Note that by definition,  $x$, $y$, and $z$ are nonnegative integers, and $w$ is a rational number as it depends on rational parameters, usually specified with precision of two decimal places~\cite{turner2010nndb}.

\subsection{Parameter Space Partition}

The number of possible RNA secondary structures for a given sequence grows exponentially with its length, but is finite. Consequently, the three-dimensional branching parameter space can be partitioned into a finite number of regions, each corresponding to a distinct optimal branching signature. In~\cite{regions} it was shown that these regions are convex sets such that any parameter combination from a given region yields the same optimal branching signature. Points located on the boundaries between regions correspond to parameter values where multiple optimal signatures coexist, each associated with a neighboring region.

This partitioning of the parameter space arises from the geometric properties of the so-called RNA branching polytope, a construct that encapsulates all possible branching signatures for a given RNA sequence~\cite{pmfe_chapt}. The geometric framework for finding the parameter space partition for a given sequence allows for a comprehensive analysis of how variations in branching parameters affect RNA secondary structure predictions. 

Analysis of the parameter space partition~\cite{polystats} showed that the regions are much thinner in the $b$-direction than in the other two. In other words, the MFE prediction is most sensitive under perturbation of the $b$ parameters. Subsequent work in~\cite{bnb} focused on restrictions of the aforementioned regions to a fixed $b=b_0$ plane, with the goal of understanding how the geometry of the target branching signature regions differ for the tRNA and 5S RNA families. 

Results showed that each of these two families has a characteristic target region geometry, which is distinct from the other and significantly different from their own dinucleotide shuffles. This explained prior results which showed that modifying the branching parameters can significantly enhance secondary structure prediction accuracy for the two individual RNA families. However, when attempting to apply a single set of optimized parameters across multiple families, the overall improvement diminishes compared to family-specific optimizations. In other words, each RNA family benefits from distinct branching parameters, as was previously observed in~\cite{polystats}.

\subsection{Parameter Plane Partition}

Since we are primarily interested in analyzing perturbations of the $a$ and $c$ parameters, here we introduce notation for the regions in the parameter space decomposition as restricted to a fixed $b=b_0$ plane.

\begin{definition}
    For a signature $\sigma$ of a sequence \seq, the region of $\sigma$ in the $b = b_0$ plane is 
    \[\reg(\sigma) = \{(a,c) : (a, b_0, c, 1) \cdot \sigma \leq (a,b_0,c,1) \cdot \sigma', \text{for every signature $\sigma'$ of $s$} \}.
    \] 
\end{definition}

The region of $\sigma$ depends on the sequence $s$ and the parameter $b_0$, but since in the discussion that follows, we assume that both are fixed, for simplicity,  we suppress this dependency in the notation of $\reg(\sigma)$ as well as other terms we define later. 

\begin{definition}
    Let $\Sigma$ be the set of signatures for a sequence \seq. The set \[\Pi = \{\reg(\sigma) : \reg(\sigma) \neq \emptyset, \sigma \in  \Sigma\}\] is the partition of the $(a,c)$-parameter plane $b=b_0$ for the sequence \seq. 
\end{definition}

If the set $\Sigma$ of all signatures is known, one can compute $\reg(\sigma)$ as an intersection of half-planes. Namely, a signature $\sigma =(x,y,z,w)$ is more optimal than another signature $\sigma'=(x',y',z',w')$ on the half-plane $H(\sigma \leq \sigma')$ given by:
\begin{equation} \label{eq:halfplane_eq} H(\sigma \leq \sigma'): \; \; \; a(x-x') + c(z-z') \leq b_0(y'-y) + (w'-w),\end{equation}
and therefore
\begin{equation} \label{eq:reg_formula} \reg(\sigma) = \bigcap_{\sigma' \in \Sigma} H(\sigma \leq \sigma').
\end{equation}

Note that~\eqref{eq:reg_formula} still applies if $\Sigma$ is taken to be the (much smaller) set of signatures that are optimal for at least one combination of parameters in the $b=b_0$ plane, and this significantly reduces the complexity of computing $\reg(\sigma)$. In Section~\ref{sec:discussion}, we present a figure of the partitions for a couple of example sequences we discuss.

Clearly, identifying a signature requires at least one MFE calculation. Since the computational time for this step increases cubically with sequence length~\cite{zukermfealg, lyngso-zuker-pedersen-99a}, we propose algorithms for constructing the partition $\Pi$ that minimize the number of required MFE computations. We note, that for analyses focused on highly localized regions, one could compute all optimal structures for a given parameter combination $p$ by employing the algorithm that enumerates all suboptimal structures within a specified energy interval around the MFE structure~\cite{wuchty1999complete}. In such cases, the regions surrounding the point $p$ exhibit the following organization. Let $\mathrm{conv}(A)$ denote the convex hull of the set $A$.

\begin{theorem} \label{thm: convhull} Let $S = \{(x_i, y_i, z_i, w_i) \; : \; i =1, \dots, n \}$ be the set of all signatures of the MFE structures for the parameters $p =(a_0, b_0, c_0, 1)$. Then $p \in \reg(x_i, y_i, z_i, w_i), i =1, \dots, n$ and these regions are geometrically arranged around $p$ in the same order the vertices of the polygon $P = \mathrm{conv}(\{(x_i, z_i) \; : \; i =1, \dots, n \})$ are arranged in the plane.
\end{theorem}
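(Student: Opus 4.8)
The plan is to treat the two assertions separately: the membership $p \in \reg(\sigma_i)$ will follow immediately from unwinding the definitions, while the cyclic arrangement will be obtained by showing that, near $p$, the partition $\Pi$ coincides with the \emph{normal fan} of the polygon $P$. Throughout I write $p = (a_0, b_0, c_0, 1)$ and $\sigma_i = (x_i, y_i, z_i, w_i)$, and I identify $p$ with its $(a,c)$-coordinates $(a_0, c_0)$ when speaking of membership in a planar region. For the first claim, note that ``$\sigma_i$ is the signature of an MFE structure at $p$'' means exactly $p \cdot \sigma_i \le p \cdot \sigma'$ for every signature $\sigma'$ of $s$; comparing with the definition of $\reg$, this is precisely the statement $(a_0, c_0) \in \reg(\sigma_i)$, giving $p \in \reg(\sigma_i)$ for all $i$.

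For the arrangement, I would first \emph{localize}. Let $E_{\min} = p \cdot \sigma_i$ be the common minimum value attained by all the $\sigma_i$. Any other signature $\sigma'$ satisfies $p \cdot \sigma' > E_{\min}$ strictly, and since $(a,c) \mapsto (a, b_0, c, 1) \cdot \sigma'$ is continuous, such $\sigma'$ stay strictly suboptimal on a small open ball $B$ about $(a_0, c_0)$. Hence inside $B$ the regions $\reg(\sigma_i)$ are cut out using comparisons among $\sigma_1, \dots, \sigma_n$ only. Introducing shifted coordinates $s = a - a_0$, $t = c - c_0$ and using the equal-energy relations $p \cdot \sigma_i = p \cdot \sigma_j$, the energy differences collapse to
\[ (a, b_0, c, 1) \cdot (\sigma_i - \sigma_j) = (s,t) \cdot \bigl( (x_i, z_i) - (x_j, z_j) \bigr). \]
Consequently, within $B$, the region $\reg(\sigma_i)$ is the cone $\{ (s,t) : (s,t) \cdot (x_i, z_i) \le (s,t) \cdot (x_j, z_j) \text{ for all } j \}$, which is exactly the normal cone of the point $(x_i, z_i)$ with respect to $P$ for the minimization functional $(s,t)$.

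The claimed ordering is then the standard order-preserving property of a planar normal fan: as the direction $(s,t)$ makes one full rotation about the origin, the vertex of $P$ minimizing $(s,t) \cdot (\cdot)$ advances monotonically around the boundary of $P$. Thus the full-dimensional normal cones are arranged about the origin in the same cyclic order as the corresponding vertices of $P$ occur on $\partial P$; translating back by $(s,t) = (a - a_0, c - c_0)$ shows the regions $\reg(\sigma_i)$ are arranged around $p$ in the cyclic order of the vertices $(x_i, z_i)$ of $P$. I would verify the monotone advance (and fix an orientation convention) on the two edges of $P$ incident to a vertex, where the switch of minimizer occurs precisely across the inward edge normals.

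The main obstacle is the bookkeeping in degenerate configurations, since the projections $(x_i, z_i)$ need not all be vertices of $P$. A projection lying in the relative interior of an edge yields a one-dimensional normal cone (a zero-area region), and one lying in the interior of $P$ yields the trivial cone $\{0\}$, so that $\reg(\sigma_i)$ degenerates to $\{p\}$; moreover two signatures with $(x_i, z_i) = (x_j, z_j)$ produce identical regions. The honest form of the statement is therefore that the \emph{positive-area} regions meeting at $p$ correspond bijectively to the vertices of $P$ and realize exactly the cyclic vertex order, with coincident projections collapsed to a single vertex. Making this correspondence precise, and confirming that it is compatible with the orientation of $\partial P$, is the step that will require the most care.
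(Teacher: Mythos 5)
Your proposal is correct and follows essentially the same route as the paper: both use the fact that all signatures in $S$ tie at $p$ to collapse the four-dimensional energy comparison to $(a,c)\cdot\bigl((x_i,z_i)-(x_j,z_j)\bigr)$, thereby identifying the regions incident with $p$ with the cones of the normal fan of $P$. Your added localization step and the discussion of degenerate projections (non-vertex points of $P$, coincident $(x_i,z_i)$) are refinements the paper leaves implicit, but they do not change the underlying argument.
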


\begin{proof}
    Let $\mathrm{NF}(P)$ be the normal fan of $P$. We will show the stronger statement that the rays in $\mathrm{NF}(P) + (a_0,c_0)$ (formed by the outer normal vectors for the edges of $P$) contain the boundary segments of regions in $b=b_0$ incident with $p$.

Let $(a,c)$ be a point such that $p+(a, 0, c, 0)$ is in a region incident with $p$ and let $(x,y,z,w) \in S$ be the signature of the MFE structures for this region. Then 
\begin{equation} \label{dot} (p+(a, 0, c, 0))\cdot (x,y,z,w) \leq (p+(a, 0, c, 0))\cdot (x',y',z',w') \end{equation}
for every $(x',y',z',w') \in S$. By the definition of $S$, we have \[p \cdot (x,y,z,w) = p \cdot (x',y',z',w')\] for every $(x',y',z',w') \in S$. Therefore,~\eqref{dot} implies 
\begin{equation} \label{dot2} (a, 0, c, 0)\cdot (x,y,z,w) \leq (a, 0, c, 0)\cdot (x',y',z',w')\end{equation} for every $(x',y',z',w') \in S$, which is equivalent to 
\begin{equation} \label{dot3} (a, c)\cdot (x,z) \leq (a, c)\cdot (x',z').\end{equation} This means that $(a, c)$ is in the cone  of $\mathrm{NF}(P)$ corresponding to $(x,z)$. Therefore, if $p+(a, 0, c, 0)$ belongs to two regions incident with $p$, $(a, c)$ is on the ray which represents the boundary of two 2d cones in $\mathrm{NF}(P)$.
\end{proof}

As a direct corollary of the proof of Theorem~\ref{thm: convhull}, we have that, unlike in an arbitrary partition of the plane into convex polygonal regions, a point shared by more than two regions must be a vertex for all of them (Figure~\ref{lemmaimage}). Otherwise, we get a contradiction to the fact that the boundaries between neighboring regions which share an edge give the directions of the outer normal vectors for the polygon $P$ defined by the associated signatures.

\begin{corollary} \label{cor:vertex} If a point is shared by more than two regions, it must be a vertex for all of them.
\end{corollary}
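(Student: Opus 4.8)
The plan is to localize the analysis at the shared point and reduce everything to the normal-fan description established in the proof of Theorem~\ref{thm: convhull}. Let $q$ be a point lying in regions $R_1, \dots, R_k$ with $k \geq 3$. Since each region is closed (it is defined by the weak inequalities in $\reg(\sigma)$), membership $q \in R_i = \reg(\sigma_i)$ means that $\sigma_i$ is optimal at the parameters $p = (a_0, b_0, c_0, 1)$ corresponding to $q$; hence every $\sigma_i$ belongs to the set $S$ of signatures of MFE structures at $p$. I would then apply the argument in the proof of Theorem~\ref{thm: convhull} verbatim at $q$: the boundary rays of the regions incident with $q$ are exactly the rays of $\mathrm{NF}(P) + (a_0, c_0)$, namely the outer normals to the edges of the polygon $P = \mathrm{conv}(\{(x_i, z_i)\})$, and each incident positive-area region $R_i$ fills, near $q$, the $2$-dimensional normal cone of a vertex $(x_i, z_i)$ of $P$.

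The second step is to exploit that $k \geq 3$ forces $P$ to be genuinely two-dimensional. Because distinct regions of the partition have disjoint interiors, two distinct positive-area regions incident with $q$ cannot share the same $2$-dimensional normal cone, so they must correspond to distinct vertices of $P$. Thus $P$ has at least three vertices and cannot degenerate to a point or a segment. For a two-dimensional convex polygon, the interior angle $\theta_v$ at each vertex is strictly less than $\pi$, so the associated normal cone---bounded by the outer normals of the two edges meeting at that vertex---is a proper wedge of angular measure $\pi - \theta_v \in (0, \pi)$.

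Finally, I would translate the cone geometry back into a statement about vertices. Near $q$ the region $R_i$ coincides with the translated normal cone of its vertex, a proper wedge with apex $q$ bounded by two distinct, non-collinear rays; by definition this makes $q$ a corner, i.e.\ a vertex, of $R_i$, rather than a point in the relative interior of one of its edges. Applying this to every $i$ yields the claim.

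The main obstacle, and the reason the hypothesis $k \geq 3$ is essential, is ruling out the degenerate possibility that $P$ collapses to a segment: with only two incident regions the two normal cones are complementary half-planes of angular measure exactly $\pi$, so $q$ lies in the relative interior of a shared edge and is not a vertex---precisely the T-junction configuration the corollary forbids once three or more regions meet. I would therefore take care to argue that positive-area regions correspond to genuine (extreme) vertices of $P$, so that no incident normal cone can have angular measure $\pi$ once $k \geq 3$.
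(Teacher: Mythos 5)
Your proof is correct and takes essentially the same approach as the paper's: both deduce the claim from the normal-fan description of the regions incident with the shared point established in the proof of Theorem~\ref{thm: convhull}. Your version simply spells out the details (each incident full-dimensional region lies in a normal cone of angle strictly less than $\pi$ once three or more regions meet, so the point cannot sit in the relative interior of an edge) that the paper compresses into a one-sentence contradiction argument.
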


\begin{figure} 
\centering
\includegraphics[width=.7\linewidth]{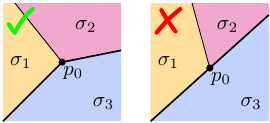}
\caption{A point shared by more than two regions is necessarily a vertex for all of them}
\label{lemmaimage}
\end{figure}

\section{Parameter Plane Partition Algorithms}\label{sec:algs}

In this section, we present two algorithms for computing the partition of the parameter plane $\Pi$, up to zero-area regions, under the assumption that the MFE optimizer accepts rational or integer inputs.

Typically, we are interested in analyzing structures that arise within a specific subset of parameters values, so we present  algorithms for computing the partition restricted to a convex polygonal frame $F$. By setting $F$ sufficiently large, one can recover the partition of the whole parameter plane, if desired.

The algorithms repeatedly perform MFE calculations, followed by the computation of the corresponding branching signature of the MFE structure. We refer to this combined process as querying an oracle.

\begin{definition}
Let $D$ be a subset of the $(a,c)$-parameter plane. An $\oracle$ is a function $D \rightarrow \Sigma$, whose value at $(a,c) \in D$ is a signature of a structure which is optimal for the parameters $(a,b_0,c)$, i.e.
\[ (a,c) \in D, \oracle(a,c) = \sigma \implies (a,c) \in \reg(\sigma). \]
\end{definition}
Note that while multiple optimal signatures may exist for a given set of parameters, we assume that the oracle is consistent -- i.e., it always returns the same signature. In practice, this consistency arises because implementations of the MFE algorithm employ deterministic procedures during the traceback phase used to recover an MFE structure.

\subsection{Decimal Precision Parameter Partition Algorithm}\label{fixed}

Let $F$ be a convex polygonal frame in the $b=b_0$ plane. In this subsection we describe an algorithm for finding the partition of the parameter space inside $F$ for a given sequence when we have an $\oracle$ which takes inputs from $D = \mathbb{Q} \times \mathbb{Q}$. 

\begin{definition}
 Let $\sigma$ and $\sigma'$ be two signatures and let $p$ be a point in the parameter space. We say that $\sigma$ dominates $\sigma'$ at $p$ if $\sigma \cdot p \leq  \sigma' \cdot p$. 
\end{definition}

The set of all points $(a, b_0, c)$ where $\sigma = (x,y,z,w)$ dominates another signature $\sigma'=(x',y',z',w')$ is the half-plane $H(\sigma \leq \sigma')$ defined by~\eqref{eq:halfplane_eq}. Therefore,  if several signatures are discovered by making oracle calls, one can find the partition of the plane or a given frame into sets on which one signature is better than the rest, using intersections of half-planes in the following way.
\begin{definition} Let $\Sigma$ be a set of signatures. For a signature $\sigma$, the dominating set of $\sigma$ relative to $\Sigma$ is 
    \[\dom_{\Sigma}(\sigma) = \bigcap_{\sigma' \in \Sigma} H(\sigma \leq \sigma').\] If $F$ is a convex polygonal frame, the dominating set of $\sigma$ in $F$ relative to $\Sigma$ is $ \dom_{\Sigma}(\sigma) \cap F.$
    \end{definition}

It follows from the definitions that discovering new signatures leads to a better approximation of the region of $\sigma$, i.e., if $\Sigma \subseteq \Sigma'$, then $\dom_{\Sigma}(\sigma) \supseteq \dom_{\Sigma'}(\sigma) \supseteq \reg(\sigma)$. This relationship forms the basis of our algorithm (Algorithm~\ref{alg1}). To identify all signatures and their corresponding regions within the frame $F$, we proceed iteratively. Each time a new signature is discovered, the frame is repartitioned into dominating sets based on the expanded set $\Sigma$. We then query the oracle at the vertices of these dominating sets, one at a time. If the oracle returns a previously discovered signature, we confirm that vertex as part of the final partition. Otherwise, the newly discovered signature is added to $\Sigma$, and the partition is updated accordingly (Figure~\ref{fig:algo_fig}). 
\begin{figure}
    \centering
    \includegraphics[width=0.7\linewidth]{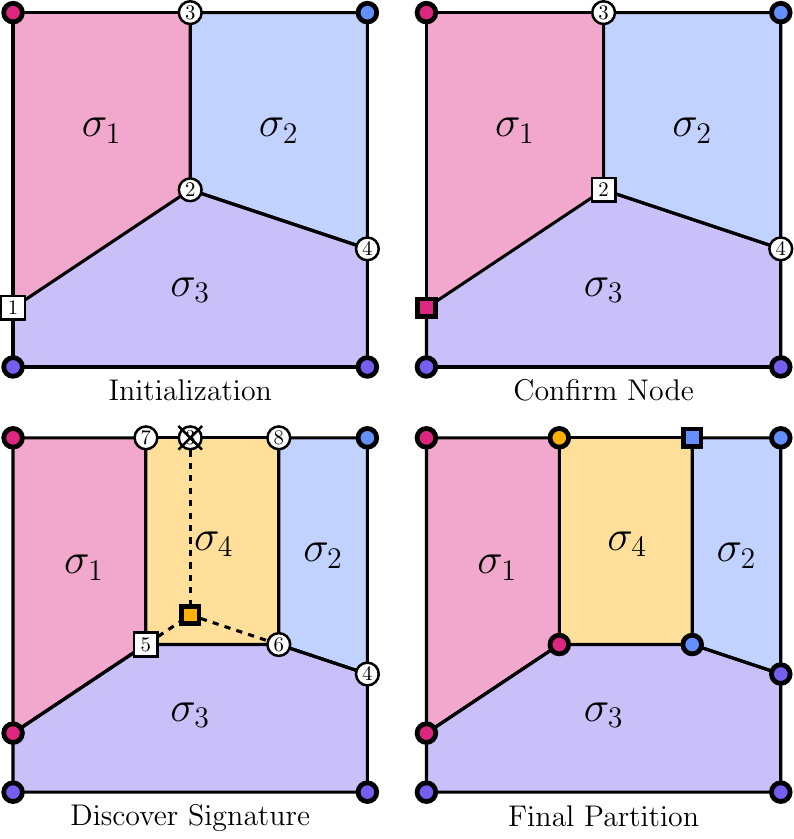}
    \caption{An artificial run of Algorithm~\ref{alg1}. Filled circles are confirmed nodes with color matching the corresponding signature obtained from the oracle. Empty circles are the points in $N$ numbered by the order to be checked. A numbered square represents the next point to be checked and a filled square is the most recently checked point. During the initialization step (top left), the signatures and nodes on the corner of the frame are initialized. When the oracle is queried at $1$, it returns one of the known signatures ($\sigma_1$), so $1$ is confirmed as a node in the final partition (top right). When the oracle is queried at $2$, it returns a new signature ($\sigma_4$), and the partition is updated, which results in 4 new vertices being added and one vertex being deleted from the list $N$ for processing (bottom left). The algorithm ends when all vertices of the current partition are confirmed (bottom right)}
    \label{fig:algo_fig}
\end{figure}

\begin{algorithm}
    \caption{Decimal Precision Parameter Partition Algorithm}\label{alg1}
    \hspace*{\algorithmicindent} \textbf{Input:} An RNA sequence $s$, value $b_0$ for scoring single-stranded nucleotides in the branching loops, a (convex polygonal) frame $F$ in the $(a,c)$-plane\\
    \hspace*{\algorithmicindent} \textbf{Output:} Partition $\Pi$ of the interior of the frame $F$ into regions with different optimal signatures; list $\Sigma$ of signatures which correspond to those regions\\
    \begin{algorithmic}
    \Function{partition1}{$F$}
    \Statex \Comment{Initialize signatures}
    \State $\Sigma \gets \{ \mathrm{oracle}(v) \mid v \text{ is  a vertex of } F\}$
    \Statex \Comment{Initialize nodes}
    \State $N \gets \{(v, \text{`confirmed'}) \mid v \text{ is  a vertex of } $F$\}$ 
    \Statex \Comment{Initialize partition}
    \For{$\sigma$ in $\Sigma$}
        \State add all other vertices of $\dom_{\Sigma}(\sigma) \cap F$ to $N$ with `unconfirmed' flag
    \EndFor
    \Statex \Comment{Search for new signatures and update partition while possible}
    \For{$v$ in $N$ with `unconfirmed' flag} 
       \State $\sigma \gets \oracle(v)$
        \If{$\sigma \in \Sigma$}
            \State change flag to `confirmed'
        \Else
            \State $\Sigma \gets \Sigma \cup \{\sigma\}$
\Statex \Comment{Include the vertices of the new dom set}
            \If{a vertex of $\dom_{\Sigma}(\sigma) \cap F$ is not in $N$}
                \State add it to $N$ with an unconfirmed flag
            \EndIf
\Statex \Comment{Exclude points which are no longer vertices of dom sets}
            \If{a point in $N$ is in the interior of $\dom_{\Sigma}(\sigma)$}
                \State remove it from $N$ 
            \EndIf
        \EndIf
    \EndFor
    \State $\Pi \gets \{\dom_{\Sigma}(\sigma) \cap F: \sigma \in \Sigma \}$ 
    \State \Return ($\Pi$, $\Sigma$)
    \EndFunction
    \end{algorithmic}
\end{algorithm}

The following lemma shows that the set $N$ in Algorithm~\ref{alg1} indeed contains the set of vertices of the dominating sets at that stage, i.e., the way we are updating the partition is valid. Namely, we show that when a new signature $\sigma$ is discovered and the dominating sets are updated, to keep  an accurate information about their vertices, in addition to taking into account the vertices of $\dom_{\Sigma}(\sigma) \cap F$ it suffices to only exclude the vertices which are in the interior of dominating set for the new signatures.

\begin{lemma} \label{lemma1} The set $N$ in Algorithm~\ref{alg1} is the union of the vertices of $\dom_{\Sigma}(\sigma) \cap F$, for all $\sigma \in \Sigma$. 
\end{lemma}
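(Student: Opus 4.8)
The plan is to prove the statement as a loop invariant of Algorithm~\ref{alg1}, maintained throughout execution, by induction on the iterations of the main loop. Write $\mathrm{Vert}(\cdot)$ for the vertex set of a convex polygon and set $V(\Sigma) := \bigcup_{\tau \in \Sigma} \mathrm{Vert}\big(\dom_{\Sigma}(\tau)\cap F\big)$; the goal is to show $N = V(\Sigma)$ at every point in the loop. For the base case, after initialization $\Sigma$ consists of the signatures returned at the corners of $F$, and the initialization loop inserts into $N$ every vertex of each $\dom_{\Sigma}(\tau)\cap F$. Since each corner of $F$ is a vertex of whichever dominating set contains it, this gives exactly $N = V(\Sigma)$ before the first iteration.

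For the inductive step, assume $N = V(\Sigma)$ and consider one query at an unconfirmed node $v$. If the oracle returns $\sigma \in \Sigma$, then neither $\Sigma$ nor the underlying point set of $N$ changes (only a flag is flipped), so the invariant persists. The substantive case is $\sigma \notin \Sigma$; put $\Sigma' = \Sigma \cup \{\sigma\}$. Augmenting $\Sigma$ by $\sigma$ intersects each old cell with a single new half-plane, $\dom_{\Sigma'}(\tau) = \dom_{\Sigma}(\tau)\cap H(\tau \le \sigma)$ for $\tau \in \Sigma$, and creates one new cell $\dom_{\Sigma'}(\sigma)$. The algorithm's update is precisely $N' = \big(N \cup \mathrm{Vert}(\dom_{\Sigma'}(\sigma)\cap F)\big)\setminus\{p : p \in \mathrm{int}\,\dom_{\Sigma'}(\sigma)\}$, and the task is to verify $N' = V(\Sigma')$.

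I would prove $N' = V(\Sigma')$ by a vertex-bookkeeping argument resting on two convex-geometry facts. The first is downward inheritance of extreme points: if $v$ is a vertex of $\dom_{\Sigma}(\tau)\cap F$ and still lies in the subset $\dom_{\Sigma'}(\tau)\cap F$, then $v$ remains a vertex of the subset. The point is that an old vertex $v \in \dom_{\Sigma}(\tau)$ satisfies $\tau\cdot v \le \sigma\cdot v$ exactly when $v \notin \mathrm{int}\,\dom_{\Sigma'}(\sigma)$: membership in $\dom_{\Sigma}(\tau)$ forces $\tau$ to be minimal over $\Sigma$, and $v \notin \mathrm{int}\,\dom_{\Sigma'}(\sigma)$ produces some $\rho \in \Sigma$ with $\sigma\cdot v \ge \rho\cdot v \ge \tau\cdot v$. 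Hence such a $v$ stays in $V(\Sigma')$ and is correctly retained, while every removed point (interior to the new cell) is strictly beaten by $\sigma$, lies in no half-plane $H(\rho \le \sigma)$, and is therefore a vertex of no cell of $\Sigma'$. The second fact handles newly created vertices: any vertex of a shrunk cell $\dom_{\Sigma'}(\tau)$ that was not already a vertex of $\dom_{\Sigma}(\tau)$ lies on the cut line $\{\tau\cdot p = \sigma\cdot p\}$ at its intersection with a second bounding line, where three cells meet, so by Corollary~\ref{cor:vertex} it is also a vertex of $\dom_{\Sigma'}(\sigma)$ and is thus supplied by the addition step.

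The main obstacle I anticipate is justifying this last step rigorously, since Corollary~\ref{cor:vertex} is stated for the full partition $\Pi$ rather than the coarser subdivision induced by a partial set $\Sigma'$. I would first verify that the corollary transfers: for any finite $\Sigma'$ the cells $\dom_{\Sigma'}(\tau)$ are exactly the linearity domains of the concave piecewise-linear map $p \mapsto \min_{\tau \in \Sigma'}\tau\cdot p$ restricted to the $b=b_0$ plane, so the normal-fan duality used to prove Theorem~\ref{thm: convhull} and Corollary~\ref{cor:vertex} applies verbatim. A secondary technical point that needs care is the frame boundary: vertices arising where a cut line meets $\partial F$ must be recognized as vertices of $\dom_{\Sigma'}(\sigma)\cap F$ so that they are added, and one must confirm that corners of $F$ swallowed into $\mathrm{int}\,\dom_{\Sigma'}(\sigma)$ are handled consistently by the removal rule; one should also assume the new cell is full-dimensional, consistent with the stated ``up to zero-area regions'' scope. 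These boundary interactions, rather than the interior combinatorics, are where I expect the argument to be most delicate.
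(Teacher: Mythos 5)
Your proposal is correct and follows essentially the same route as the paper's proof: induction over oracle calls, with the same three-part bookkeeping (surviving old vertices remain vertices of the shrunk cells, points interior to the new cell are vertices of nothing, and new vertices of old cells are triple points or frame-boundary points and hence vertices of the new cell by Corollary~\ref{cor:vertex}). If anything you are more explicit than the paper about two points it glosses over — that Corollary~\ref{cor:vertex} must be transferred from the true partition to the subdivision induced by a partial $\Sigma'$, and the handling of vertices created where the cut line meets $\partial F$.
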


\begin{proof} The claim is clearly true at the initialization step because $N$ is formed from precisely the vertices of the dominating sets for the signatures discovered at the corners of the frame $F$. Let $\sigma'$ be a new signature discovered, after which the set $\Sigma$ is updated to $\Sigma' = \Sigma \cup \{\sigma'\}$, and $N$ is updated to $N'$. 

The dominating sets have disjoint interiors, so clearly a vertex of $\dom_{\Sigma}(\sigma) \cap F$ which is in the interior of $\dom_{\Sigma'}(\sigma')$ is no longer a vertex of an updated  dominant set. A point in $N$ which is not in the interior of $\dom_{\Sigma'}(\sigma')$ is a vertex of a dominant set in the updated partition because \begin{equation}\label{eq:newdom} \dom_{\Sigma'}(\sigma) =  \dom_{\Sigma}(\sigma) \setminus \mathrm{interior}(\dom_{\Sigma'}(\sigma')).\end{equation} This shows that the  elements of $N \cap N'$ are indeed vertices of the updated partition. 

By definition $N' \setminus N$ is a subset of the vertices of $ \dom_{\Sigma'}(\sigma') \cap F$, so $N'$ is a subset of the vertices of the updated partition. To show equality, we argue that for $\sigma \in \Sigma$, each vertex of $\dom_{\Sigma'}(\sigma) \cap F$ is either a vertex of $\dom_{\Sigma}(\sigma) \cap F$ or is a vertex of $ \dom_{\Sigma'}(\sigma') \cap F$, and is therefore in $N'$. Namely, suppose that $\dom_{\Sigma'}(\sigma) \cap F$ has a vertex $v$ which is not a vertex of $\dom_{\Sigma}(\sigma) \cap F$. Then because of~\eqref{eq:newdom}, $v$ is on the boundary of $\dom_{\Sigma'}(\sigma') \cap F$. If $v$ is on the boundary of $F$, then it is necessarily a vertex of $\dom_{\Sigma'}(\sigma') \cap F$, by Corollary~\ref{cor:vertex}. 
\end{proof}

We show that when a vertex is confirmed, it is indeed a vertex in the target partition. This explains why we can stop searching for new regions around that vertex.

\begin{lemma}\label{lem:nodes}
 When a vertex is confirmed in Algorithm~\ref{alg1}, it is a node in the target partition.  
\end{lemma}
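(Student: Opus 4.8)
The plan is to show that a confirmed vertex $v$, together with the signature $\sigma = \oracle(v)$ returned there, forces several distinct \emph{true} regions to meet at $v$, so that Corollary~\ref{cor:vertex} identifies $v$ as a genuine node. First I would record the two facts available at the instant of confirmation. By the oracle's defining property, $v \in \reg(\sigma)$, so $\sigma$ is optimal at $v$ \emph{among all} signatures of $s$, not merely among the currently discovered set $\Sigma$. By Lemma~\ref{lemma1}, $v$ is a vertex of the current dominating-set partition. I would then argue that, since $v \in \reg(\sigma) \subseteq \dom_{\Sigma}(\sigma)$ and $v$ is a partition vertex, $v$ is in fact a vertex of the convex polygon $\dom_{\Sigma}(\sigma)$ itself: a point lying in the interior, or in the relative interior of an edge, of the cell that contains it cannot be a partition vertex, because the dominating sets have pairwise disjoint interiors and, by the normal-fan argument behind Theorem~\ref{thm: convhull} applied to $\Sigma$, the current partition has no $T$-junctions.

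Next I would exploit the local geometry at $v$. Write $p_v = (v_a, b_0, v_c, 1)$. When $v$ is interior to $F$, it is the meeting point of two edges of $\dom_{\Sigma}(\sigma)$ lying on the bisectors $H(\sigma \le \sigma_1)$ and $H(\sigma \le \sigma_2)$ for two \emph{distinct} signatures $\sigma_1, \sigma_2 \in \Sigma \setminus \{\sigma\}$ (distinct, else the two edges are collinear and $v$ is not a corner); on those bisectors equality holds, so $\sigma \cdot p_v = \sigma_1 \cdot p_v = \sigma_2 \cdot p_v$. Since $\sigma \cdot p_v$ is the minimum of $\sigma' \cdot p_v$ over \emph{all} signatures $\sigma'$ (this is the content of $v \in \reg(\sigma)$), these equalities upgrade $\sigma_1$ and $\sigma_2$ to globally optimal at $v$, giving $v \in \reg(\sigma) \cap \reg(\sigma_1) \cap \reg(\sigma_2)$. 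As these three true regions are distinct, Corollary~\ref{cor:vertex} makes $v$ a vertex of each, hence a node of $\Pi$.

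Finally I would dispose of the boundary case. If $v \in \partial F$, it is either a corner of $F$ — confirmed at initialization and trivially a node of the restricted partition — or a corner of $\dom_{\Sigma}(\sigma) \cap F$ formed by a single bisector $H(\sigma \le \sigma_1)$ meeting an edge of $F$. The same tie argument yields $v \in \reg(\sigma) \cap \reg(\sigma_1)$ with $\sigma \neq \sigma_1$, so the boundary between two distinct true regions crosses $\partial F$ at $v$, making $v$ a vertex of $\reg(\sigma) \cap F$.

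The step I expect to be the main obstacle is the transfer from optimality within the discovered set $\Sigma$ to optimality among all signatures: a vertex of the current partition a priori only records ties among signatures in $\Sigma$, and some as-yet-undiscovered signature could be strictly better at $v$ and destroy the claimed region incidences. The oracle guarantee $v \in \reg(\sigma)$ is precisely what rules this out, and I would take care to use it exactly — both to promote $\sigma_1, \sigma_2$ to genuine optimizers and to justify applying Corollary~\ref{cor:vertex} to the true regions rather than to the provisional dominating sets. A secondary point needing care is verifying that $v$ is a vertex of $\dom_{\Sigma}(\sigma)$ for the specific $\sigma$ the oracle returns, and not merely of a neighboring cell; this is where the absence of $T$-junctions in the current partition is invoked.
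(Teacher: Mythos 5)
Your argument is correct, and its engine is the same as the paper's: the decisive point in both proofs is that when the oracle at $v$ returns a signature already in $\Sigma$, the minimum of the energy at $v$ over the discovered set $\Sigma$ is certified to equal the minimum over \emph{all} signatures, so the ties recorded at $v$ by the provisional partition are genuine ties between true regions --- exactly the obstacle you single out, and you resolve it the same way. The difference lies in the final step. The paper takes a $\sigma$ for which $v$ is a vertex of $\dom_{\Sigma}(\sigma)\cap F$ (guaranteed by Lemma~\ref{lemma1}), concludes $v\in\reg(\sigma)$, and then simply observes that a corner of the convex set $\dom_{\Sigma}(\sigma)\cap F$ which lies in the convex subset $\reg(\sigma)\cap F$ must be a corner of that subset; nothing further is needed. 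You instead exhibit three distinct true regions meeting at $v$ and invoke Corollary~\ref{cor:vertex}. That route also works, but it carries two extra obligations the containment argument avoids: (i) the absence of $T$-junctions in the \emph{provisional} partition, needed to place $v$ as a vertex of the cell of the oracle's own $\sigma$ rather than merely of some neighbouring cell --- legitimate, since the normal-fan argument of Theorem~\ref{thm: convhull} applies verbatim with $\Sigma$ in place of the full signature set, and the paper itself uses this fact inside the proof of Lemma~\ref{lemma1}, but it is an additional step; and (ii) the distinctness and non-degeneracy of $\reg(\sigma)$, $\reg(\sigma_1)$, $\reg(\sigma_2)$, which you assert rather than check (two discovered signatures could in principle induce the same affine function on the $b=b_0$ plane, or one of these regions could have zero area, in which case the count ``more than two regions'' needs a word). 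Neither issue is fatal, but the paper's one-line containment disposes of all of them at once, including your boundary case.
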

\begin{proof} Let $v$ be a point which is in $N$ at some stage of the algorithm and gets marked `confirmed'. If $v$ is a vertex of $F$, it is clearly also a vertex of at least one region in the target partition. Otherwise, $v$ is added to $N$ with an `unconfirmed' flag as a vertex of a dominant set for a new discovered signature.  As new signatures are discovered, the set $N$ gets updated so it always contains only the vertices of the dominant sets at that stage (Lemma~\ref{lemma1}). 

Let $\Sigma$ be the set of discovered signatures when the oracle is called with input $v$. At that point the value $\sigma \cdot v$ is minimized over $\Sigma$ by the signatures $\sigma$ whose dominant sets contain $v$. If no new signature is discovered by calling the oracle, then  that is the absolute minimum over all possible signatures. Therefore, if $v$ is a vertex of $F \cap \dom_{\Sigma}(\sigma)$ for $\sigma \in \Sigma$ then $v \in \reg(\sigma)$ and is a node, because $\reg(\sigma) \subseteq \dom_{\Sigma}(\sigma)$.
\end{proof}

\begin{theorem}[Correctness] Algorithm~\ref{alg1} finds all signatures $\sigma$ for which $\reg(\sigma) \cap F$ has non-zero area.
\end{theorem}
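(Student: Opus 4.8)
The plan is to argue by contradiction, examining the terminal state of the algorithm. Suppose the algorithm halts with discovered signature set $\Sigma$ and partition $\Pi = \{\dom_{\Sigma}(\sigma) \cap F : \sigma \in \Sigma\}$, yet there is a signature $\tau$ of $s$ with $\reg(\tau) \cap F$ of non-zero area and $\tau \notin \Sigma$. Abbreviate the affine functional attached to a signature $\sigma$ by $f_{\sigma}(a,c) = (a,b_0,c,1)\cdot \sigma$. Since $\reg(\tau) \cap F$ has positive area, I can choose $p$ in its relative interior; as $\reg(\tau)\cap F \subseteq \reg(\tau)$, the point $p$ lies in the interior of $\reg(\tau)$, where $\tau$ strictly dominates: for each non-degenerate competitor $\sigma'$, a small ball around $p$ sits inside the closed half-plane $H(\tau \leq \sigma')=\{f_\tau \leq f_{\sigma'}\}$, which forces $f_{\tau}(p) < f_{\sigma'}(p)$ (otherwise the nonzero gradient would push $f_\tau - f_{\sigma'}$ positive nearby).

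Next I locate $p$ within the terminal partition. Because the dominating sets cover $F$, the point $p$ lies in $\dom_{\Sigma}(\sigma)\cap F$ for some $\sigma \in \Sigma$; fix this $\sigma$ and set $g = f_{\tau} - f_{\sigma}$, an affine function of $(a,c)$ with $g(p) < 0$. The crucial step is to show $g(v) \geq 0$ at every vertex $v$ of $\dom_{\Sigma}(\sigma)\cap F$. By Lemma~\ref{lemma1}, at termination the set $N$ consists of exactly the vertices of the current dominating sets, and the halting condition guarantees each is confirmed. Thus the oracle returned some $\sigma^{*}\in\Sigma$ with $v\in\reg(\sigma^{*})$, so $f_{\sigma^{*}}(v)\leq f_{\sigma'}(v)$ for every signature $\sigma'$ of $s$, in particular $f_{\sigma^{*}}(v)\leq f_{\tau}(v)$. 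Combining this with $f_{\sigma}(v)\leq f_{\sigma^{*}}(v)$ (valid because $v\in\dom_{\Sigma}(\sigma)$ minimizes $f_{\sigma'}(v)$ over $\Sigma$) gives $f_{\sigma}(v)\leq f_{\tau}(v)$, i.e.\ $g(v)\geq 0$.

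To conclude, I use that $\dom_{\Sigma}(\sigma)\cap F$ is a bounded convex polygon (an intersection of half-planes with the frame), hence the convex hull of its vertices, so the affine function $g$ attains its minimum at a vertex. Therefore $g\geq 0$ on the entire cell, contradicting $g(p)<0$. This contradiction shows that no positive-area region can carry an undiscovered signature, which is the claim.

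I expect the main obstacle to be the vertex inequality $g(v)\geq 0$, since it is precisely where three ingredients must be fused correctly: Lemma~\ref{lemma1} (that $N$ tracks exactly the current vertices), the halting condition (all of them confirmed), and the defining property of the oracle (it returns a \emph{globally} optimal signature, so $\tau$ cannot strictly beat a confirmed vertex). A secondary, purely technical point to dispatch is the degenerate case in which $\tau$ and $\sigma$ tie identically on the whole plane $b=b_0$ (equal $(x,z)$ coordinates \emph{and} equal $b_0 y + w$); then $\reg(\tau)=\reg(\sigma)$ as sets, and discovering the representative $\sigma$ already accounts for this cell, so I would state the conclusion as recovering every positive-area cell together with one optimal signature for it. I would also note that the argument presupposes termination, which is a separate (and easier) matter, following from finiteness of the signature set $\Sigma$.
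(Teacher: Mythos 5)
Your proof is correct and follows essentially the same route as the paper's: the decisive step in both is that every vertex of a terminal dominating set has been confirmed by the oracle as globally optimal, so that an affine energy comparison on a bounded convex cell is settled at its vertices (the paper packages this as $\dom_{\Sigma}(\sigma)=\reg(\sigma)$ and then invokes the impossibility of two distinct regions overlapping in positive area, whereas you derive the same contradiction pointwise at an interior point $p$). Your explicit treatment of the degenerate case where two distinct signatures induce the identical functional on the plane $b=b_0$ is a point the paper's proof glosses over, and your restatement of the conclusion --- one optimal signature recovered per positive-area cell --- is the accurate reading of what the algorithm actually guarantees.
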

\begin{proof}
Let $\sigma$ be a signature found by Algorithm~\ref{alg1}. All the vertices of  $\dom_{\Sigma}(\sigma) \cap F$ are sent to the oracle at some point in the algorithm (either because the vertex was confirmed previously or was added to $N$ when $\sigma$ was discovered and confirmed later). This means that the minimum value of the energy function for those parameters has been confirmed to be achieved by the signatures of the dominating sets the vertex belongs to. Since $\sigma$ is optimal for all the vertices of $\dom_{\Sigma}(\sigma)$, by convexity, it is also optimal for all interior vertices of $\dom_{\Sigma}(\sigma)$. Therefore, $\dom_{\Sigma}(\sigma) \subseteq \reg(\sigma)$. Since the reverse inclusion is always true, we have $\dom_{\Sigma}(\sigma) = \reg(\sigma)$ for every signature $\sigma$ found by Algorithm~\ref{alg1}.

Let $\sigma_0$ be a signature for which $\reg(\sigma_0) \cap F$ has non-zero area. Then there is a discovered signature $\sigma \neq \sigma_0$ such that $\reg(\sigma_0) \cap \reg(\sigma_0)$ also has non-zero area, which is a contradiction with the fact that distinct regions can only intersect at a vertex or along a segment.
\end{proof}

\begin{corollary}[Complexity]\label{cor:complexity}
 The number of oracle calls in Algorithm~\ref{alg1} is $n + m$, where $n$ is the number of regions discovered and $m$ in the number of vertices in the partition.
\end{corollary}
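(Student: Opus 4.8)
The plan is to charge every oracle call to exactly one of two mutually exclusive events and then count the events separately. Concretely, I would classify each call by the oracle's return value relative to the set $\Sigma$ of signatures discovered so far at the moment the call is made: a call is a \emph{discovery} if the returned signature is new (and is therefore added to $\Sigma$), and a \emph{confirmation} otherwise. Every call falls into exactly one class, so the total number of oracle calls is the sum of the two counts, and it remains to show that these counts are $n$ and $m$.

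Counting discoveries is the easy half. The set $\Sigma$ is only ever enlarged, never shrunk, and a given signature is added precisely once, namely the first time the oracle returns it. Hence the discovery calls are in bijection with the elements of the final $\Sigma$, that is, with the $n$ regions found by the algorithm. This yields exactly $n$ discovery calls, independent of the order in which the nodes of $N$ are processed.

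Counting confirmations is where the work lies, and I would lean on the two structural lemmas already established. By Lemma~\ref{lemma1}, at every stage $N$ is exactly the set of vertices of the current partition into dominating sets; since the main loop halts only once no unconfirmed node remains, the confirmed nodes at termination are precisely the vertices of the final partition $\Pi$. By Lemma~\ref{lem:nodes}, each confirmation marks a genuine vertex of $\Pi$, and because a confirmed node is never revisited, no vertex is confirmed twice. For the converse, I would argue that a vertex surviving into $\Pi$ can never be deleted from $N$: the only removals are of points that fall into the interior of a newly discovered dominating set, and by~\eqref{eq:newdom} such points are not vertices of any updated dominating set. Thus every one of the $m$ vertices persists in $N$ until it is queried and confirmed, giving $m$ confirmation calls, and the total $n+m$ follows.

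The main obstacle is the bookkeeping at the boundary between the two classes, which I expect to require Corollary~\ref{cor:vertex} together with generic-position reasoning. Two cases are delicate. First, the initialization sweep queries the corners of $F$ and flags them confirmed at once, so I must verify that a corner whose query happened to discover a new signature is accounted for consistently and not charged as both a discovery and a separate confirmation. Second, there are transient nodes that are queried, return a new signature, and are then swallowed by the interior of the new dominating set: these incur a discovery call yet never become vertices of $\Pi$. Showing that these cases net out so that the confirmation calls are in exact bijection with the $m$ vertices of $\Pi$ is the step I would treat most carefully, since it is precisely what pins the count to $n+m$.
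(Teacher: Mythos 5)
Your decomposition of the oracle calls into discoveries and confirmations, with Lemma~\ref{lem:nodes} certifying that each confirmation marks a genuine vertex of the partition, is exactly the paper's proof, which consists of precisely that one sentence. The boundary cases you flag are real but are not addressed by the paper either --- e.g.\ a frame corner whose initialization query discovers a new signature is charged as a discovery yet is also one of the $m$ vertices, so the literal call count can fall below $n+m$ --- which means your extra care exposes a looseness in the stated bound rather than a gap in your own argument.
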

\begin{proof}
This follows from the fact that when an oracle is called, either a new signature is discovered, or by Lemma~\ref{lem:nodes}, a node in the partition is confirmed.   
\end{proof}

\subsection{Integer Precision Parameter Partition Algorithm}\label{sec:integer}

Let $F$ be a convex polygonal frame in the $b=b_0$ plane. In this section we describe an algorithm for finding the partition of the parameter space inside $F$ for a given sequence when we have an $\oracle$ which takes inputs from the integer lattice $\Lambda = \mathbb{Z} \times \mathbb{Z}$. 

 In this case one can find dominating sets as before, but querying the oracle at their vertices is usually impossible because they are rarely lattice points. To circumvent this problem, when we need to confirm a vertex $(a,c)$ in $N$, which is not in $\Lambda$, we call the oracle for the set of lattice points around it:
$\{(\lfloor a \rfloor, \lfloor b \rfloor), (\lfloor a \rfloor, \lceil b \rceil), (\lceil a \rceil, \lfloor b \rfloor), (\lceil a \rceil, \lceil b \rceil)\}$ (Algorithm~\ref{alg2}). If a new signature is found for any of these lattice points, the set $\Sigma$ is updated and the partition is recomputed. If no new signature is found around the vertex $(a,c)$, we call that vertex confirmed. Unlike in Algorithm~\ref{alg1}, in this case there is a possibility that a signature with a small non-zero region is not discovered. However, spot-checking the output against the results obtained by computing the signature of every lattice point in the rectangle Grec for several RNA sequences from the Archive II dataset, revealed that no non-zero regions were missed (we give a description of the Archive II dataset and Grec in Section~\ref{sec:results}.)

\begin{algorithm}[htb]
    \caption{Integer Precision Parameter Partition Algorithm}\label{alg2}
    \hspace*{\algorithmicindent} \textbf{Input:} An RNA sequence $s$, value $b_0$ for scoring single-stranded nucleotides in the branching loops, a (convex polygonal) frame $F$ in the $(a,c)$-plane\\
    \hspace*{\algorithmicindent} \textbf{Output:} Partition $\Pi$ of the interior of the frame $F$ into regions with different optimal signatures; list $\Sigma$ of signatures which correspond to those regions\\
    \begin{algorithmic}
    \Function{partition2}{$F$}
    \Statex \Comment{Initialize signatures}
    \State $\Sigma \gets \{ \mathrm{oracle}(v) \mid v \text{ is  a lattice point near a vertex of } F\}$
    \Statex \Comment{Initialize nodes}
    \State $N \gets \{(v, \text{`confirmed'}) \mid v \text{ is  a vertex of } $F$\}$ 
    \Statex \Comment{Initialize partition}
    \For{$\sigma$ in $\Sigma$}
        \State add all vertices of $\dom_{\Sigma}(\sigma) \cap F$ to $N$ with `unconfirmed' flag
    \EndFor
    \Statex \Comment{Search for new signatures and update partition while possible}
    \For{$v$ in $N$ with `unconfirmed' flag} 
        \State $\mathrm{Sigs}_v \gets \{\oracle(p) \mid p \text{ lattice point near } v \}$
        \If{$\mathrm{Sigs}_v \subseteq \Sigma$}
            \State change flag of $v$ to `confirmed'
        \Else
            \State $\Sigma \gets \Sigma \cup \mathrm{Sigs}_v$
\Statex \Comment{Include the vertices of the new dom set}
            \If{a vertex of $\dom_{\Sigma}(\sigma) \cap F$ is not in $N$}
                \State add it to $N$ with an unconfirmed flag
            \EndIf
\Statex \Comment{Exclude points which are no longer vertices of dom sets}
            \If{a point in $N$ is in the interior of $\dom_{\Sigma}(\sigma)$}
                \State remove it from $N$ 
            \EndIf
        \EndIf
    \EndFor
    \State $\Pi \gets \{\dom_{\Sigma}(\sigma) \cap F: \sigma \in \Sigma \}$ 
    \State \Return ($\Pi$, $\Sigma$)
    \EndFunction
    \end{algorithmic}
\end{algorithm}

\newpage

\section{Results}\label{sec:results}

In this section, we present an analysis of the computational complexity of our method using a benchmarking dataset. Previous work on this dataset~\cite{poznanovic2025can} demonstrated that statistically significant improvements in prediction accuracy can be achieved by considering an ensemble of branching structures as the basis for prediction. We replicate those findings using a different implementation of the MFE optimization and further investigate how variations in the NNTM  model influence the set of structures generated through branching reparameterization.

\subsection{Complexity} 
We implemented Algorithm~\ref{alg2} in which one oracle call consists of two steps: running \verb+ViennaRNA+~\cite{lorenz-etal-11} to compute an MFE structure for modified branching parameters $(a,b,c)$ followed by using a structure parser to find the corresponding signature. The code is freely available at: \url{https://github.com/gtDMMB/ParamPartAlg}. Our computations were done on a RHEL7.9 system with an Intel(R) Xenon(R) Gold 6240 CPU @ 2.60GHz.

We computed the partition inside the  Grec frame identified in~\cite{poznanovic2025can} for each sequence in the Archive II benchmarking dataset~\cite{sloma2016exact, mathews-19} from the Mathews Lab (U Rochester). For analyzing the parameter space in~\cite{poznanovic2025can}, it was convenient to make a change of variables: $\bar{a} = a + 3c$ (as each multiloop has at least 3 helices, $\bar{a}$ represents the minimum penalty per multiloop). In these coordinates, the frame Grec is the rectangle $300 \leq \bar{a} \leq 800$, $-350 \leq c \leq 100$ in the $b=0$ plane (in dekacal/mol). The Archive II dataset contains 3948 sequences from 10 families, with summary statistics given in Table~\ref{tab:dataset}. Since the set sizes vary significantly, for visual representations of data that follow, we group the families with fewer than 100 sequences together.

\begin{table}[h]
\caption{Archive II dataset}\label{tab:dataset}
\begin{tabular}{@{\extracolsep\fill}lcccc}
\toprule%
& & \multicolumn{3}{@{}c@{}}{sequence length} \\\cmidrule{3-5}%
family & {set size} & min& median& max\\
\midrule
tRNA  &  557 & 54 & 76.0 & 93\\
5S rRNA &  1,283 & 102 & 119.0 & 135\\
SRP RNA &  928 & 28 & 117.5 & 533\\
RNase P &  454 & 120 & 330.0 & 486\\
tmRNA &  462 & 102 & 363.0 & 437\\
16S rRNA domains &  88 & 73 & 399.0 & 655\\
group I intron &  98 & 210 & 431.0 & 736\\
telomerase &  37 & 382 & 445.0 & 559\\
23S rRNA domains  &  30 & 242 & 446.5 & 708\\
group II intron &  11 & 619 & 739.0 & 780\\
\botrule
\end{tabular}
\end{table}

\begin{figure}
\centering
\includegraphics[width=\linewidth]{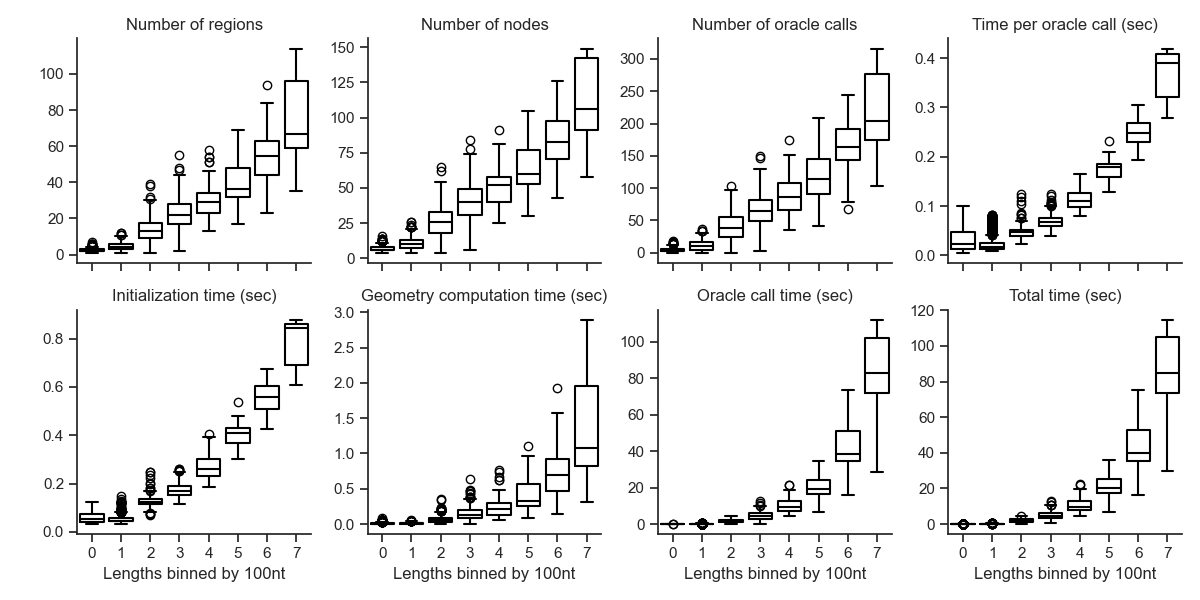}
\caption{Computational complexity of Algorithm 2 on the Archive II dataset, with sequences binned by length in 100nt intervals. The number of sequences in each bin is: 0–99nt (814), 100–199nt (1,579), 200–299nt (379), 300–399nt (911), 400–499nt (170), 500–599nt (58), 600–699nt (28), and $\geq700$nt (9)}
\label{fig:timingboxplots}
\end{figure}

Complexity measurements on this dataset for the Grec frame are summarized  in Figure~\ref{fig:timingboxplots}. As expected, the (avg) time per oracle call increases with sequence length (the MFE algorithm complexity is $O(n^3)$, where $n$ is sequence length~\cite{zukermfealg}). Since both the number of multiloops and the maximum number of branches scale linearly with sequence length, the total number of regions grows as $O(n^2)$. Empirically, each region introduces approximately two new nodes on average, so the node count also follows a quadratic trend.  Consequently, the total number of oracle calls exhibits $O(n^2)$ growth, and the total time grows at a rate $O(n^5)$ (Figure~\ref{fig:timecurves}),  but the coefficients are quite small, so this keeps the computation highly tractable at the lengths we consider.

\begin{figure}[htb]
\centering
\includegraphics[width=\linewidth]{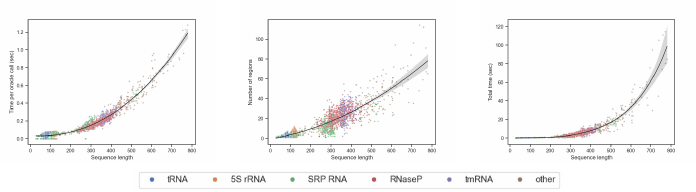}
\caption{Time per oracle call, number of regions, and total compute time for finding the partition of Grec, with the best fitted cubic, quadratic and degree 5 polynomial, respectively}
\label{fig:timecurves}
\end{figure}

The total compute time for the sequences of length $\leq 200$ (60\% of the dataset) is well under one second, with $\max =0.68$ sec. For longer sequences,  the break-up by allocations of the total compute time shows that the time complexity is primarily affected by the number of oracle calls -- 92\% of the total time for the sequences of length $>200$ (40\% of the dataset) is spent on oracle calls after initialization. In other words, the time spent on initialization and geometry (partition updates) is negligible -- even for the 9 sequences of length $\geq 700$, this time is 1.95 sec. This supports the premise that an efficient algorithm for finding the partition should minimize number of MFE predictions.

\subsection{Assessment of Alternative Configurations}

The branching structures from Grec were compared against the target structure for each sequence using the F1 score. For each sequence we found the most accurate branching structure in the sample and we assessed how prediction accuracy would be improved if we were able to identify this structure as an outcome of a prediction method. Figure~\ref{fig:scatetrplots} shows the distribution of the improvements along with other sequence characteristics that are known to affect prediction accuracy: length, pairing density, and GC content.

\begin{figure}
\centering
\includegraphics[width=\linewidth]{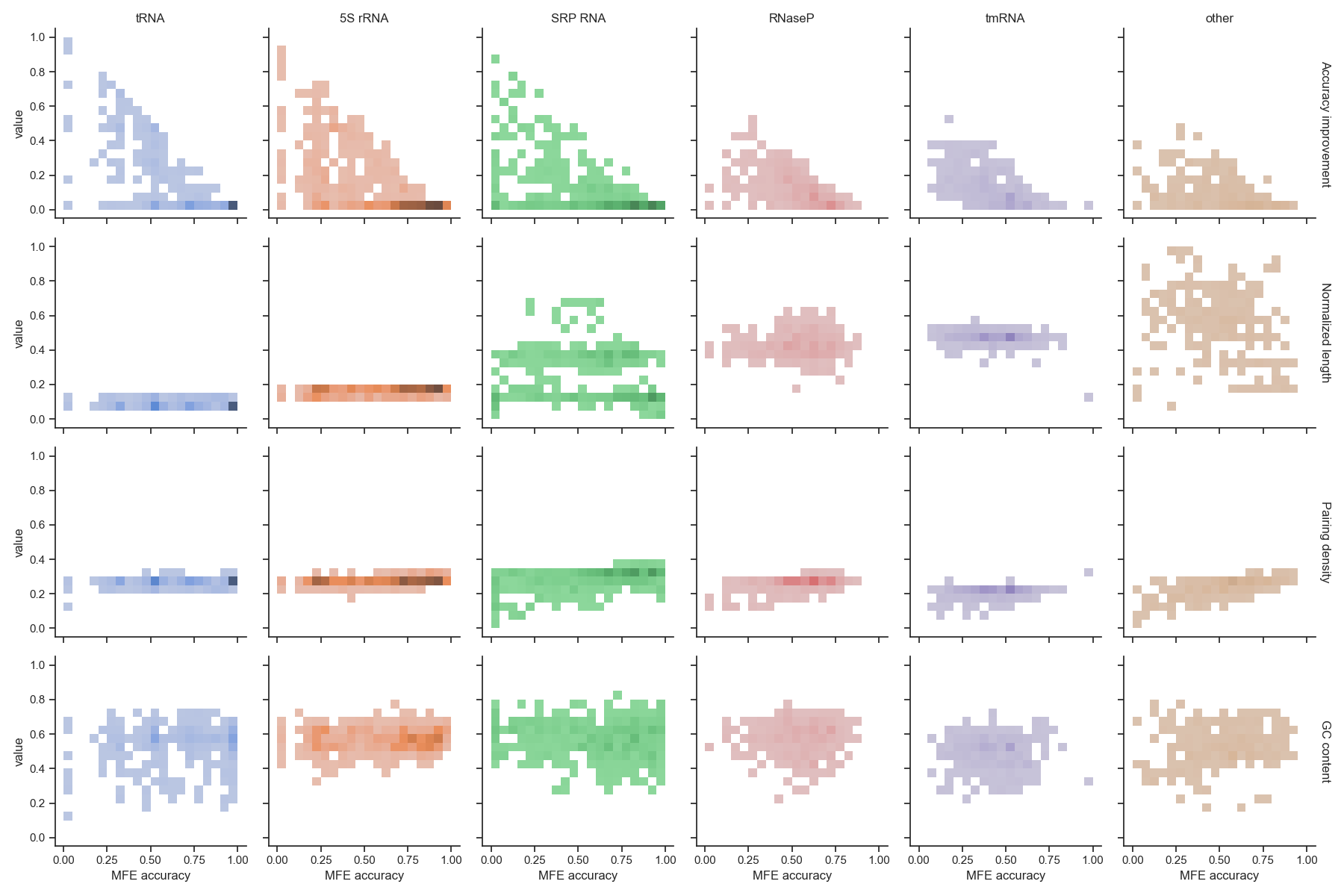}
\caption{Distribution of accuracy improvement when Best Grec structure is considered along with sequence length, pairing density and GC content. Length has been normalized by $\max = 780$}
\label{fig:scatetrplots}
\end{figure}

As can be seen from Table~\ref{tab:mn2vsgn2}, the accuracy of the best Grec structure is statistically significantly higher than the MFE accuracy for all families, under a Bonferroni correction. This confirms the conclusions from~\cite{poznanovic2025can}, where such comparisons were made when slices were generated using different MFE prediction software.

\begin{table}[htb]
\caption{Grec best vs MFE prediction accuracy. The p-value from a paired t-test is given for each family and bolded when significant at a Bonferroni-corrected threshold
$0.005$.}\label{tab:mn2vsgn2}
\begin{tabular}{@{\extracolsep\fill}lccccc}
\toprule%
& \multicolumn{2}{@{}c@{}}{MFE} & \multicolumn{2}{@{}c@{}}{Grec best} & \\\cmidrule{2-3}\cmidrule{4-5}%
family & avg & std & avg & std & p-value\\
\midrule
tRNA  &  0.68	& 0.25	& 0.81	& 0.17 & \textbf{0.0000}\\
5S rRNA &  0.62 & 0.26 & 0.71 & 0.20 & \textbf{0.0000}\\
SRP RNA &  0.59	& 0.30	& 0.67	& 0.25 & \textbf{0.0000}\\
RNase P &  0.54 & 0.16 & 0.66 & 0.13 & \textbf{0.0000}\\
tmRNA &  0.43	& 0.15	& 0.54	& 0.13 & \textbf{0.0000}\\
16S rRNA domains &  0.51 & 0.23 & 0.60 & 0.21 & \textbf{0.0000}\\
group I intron &  0.50 & 0.21 & 0.60 & 0.19 & \textbf{0.0000}\\
telomerase &  0.49	& 0.15	& 0.58	& 0.11 & \textbf{0.0000}\\
23S rRNA domains  &  0.66 & 0.14 &  0.74 & 0.09 & \textbf{0.0000}\\
group II intron &  0.29	& 0.12	& 0.38	& 0.14 & \textbf{0.0002}\\
\botrule
\end{tabular}
\end{table}

It is worth considering how best Grec compares with other prediction methods which generate multiple structures. The current standard for alternative structure generation is stochastic sampling from the Boltzmann distribution~\cite{mccaskill-90, ding-lawrence-03}. \verb+ViennaRNA+  uses its computed partition function to generate a centroid secondary structure,  defined as the secondary structure whose base pairs each have marginal probability $p > 0.5$ in the Boltzmann distribution~\cite{ding-chang-lawrence-05}.  

While for most of the sequences the best accuracy is still the MFE or the centroid structure, we find that Grec best is an improvement over both for 47\% of the sequences. The breakdown per family is given in Figure~\ref{fig:scatter}. Furthermore, this improvement exceeds 0.05 in 34\% of sequences and 0.10 in 24\% of sequences. These results demonstrate that generating alternative structures via branching reparametrization yields a new rich ensemble that can substantially enhance prediction accuracy.

\begin{figure}
\centering
\includegraphics[width=.6\linewidth]{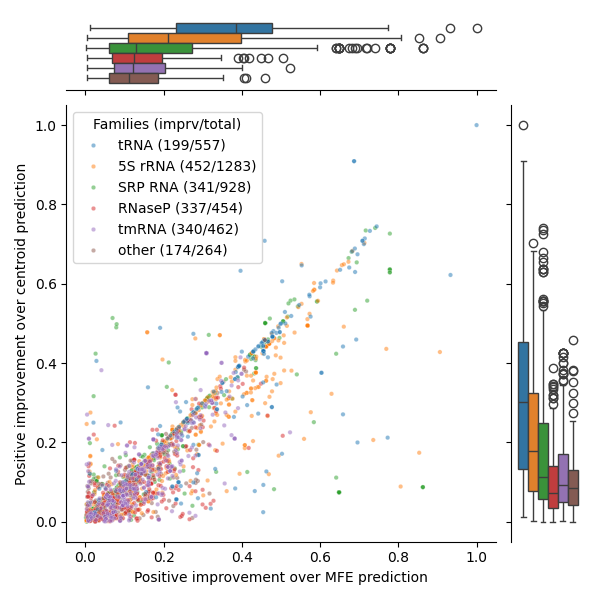}
\caption{Positive improvements of the best Grec structure over the MFE prediction and the Boltzmann centroid, with the marginal distributions. The improvements pictured are statistically significant according to a paired t-test at a Bonferroni-corrected threshold $0.008$.}\label{fig:scatter}
\end{figure}

The predictions discussed thus far were generated using the default options on the 
\verb+ViennaRNA+ web server~\cite{gruber2008vienna}: noLP (no lonely pairs) and the d2 dangling end model (dangling energies are included on both sides of a helix in any case). However, small changes in the model can change the prediction significantly~\cite{gtfold} and we find that toggling the noLP option off and using the d1 dangling end model (unpaired bases can participate in at most one dangling end) affect the best Grec predictions as well. 

\begin{figure}
\centering
\includegraphics[width=.9\linewidth]{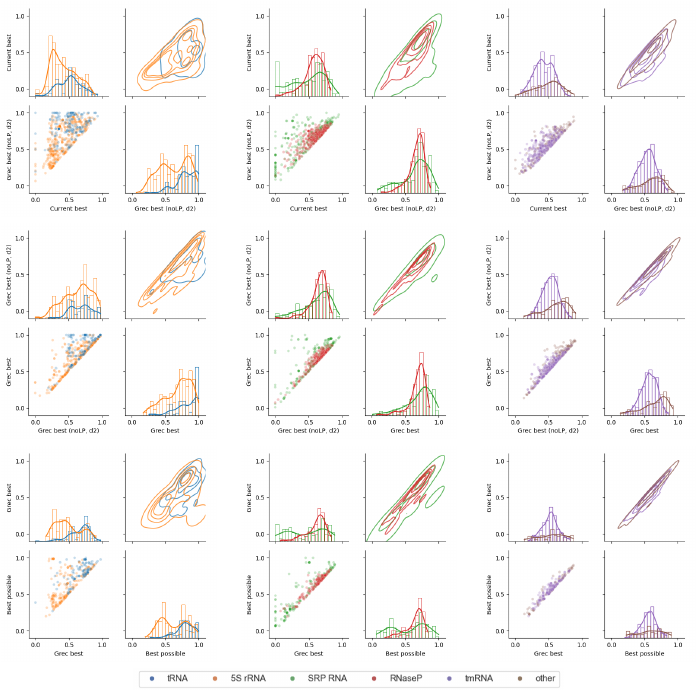}
\caption{Positive accuracy improvements as the set of branching structures is expanded. Grec best with default options offers an improvement over the current standard predictions -- MFE and centroid -- for 47\% of the sequences (top row). Considering Grec best structures when some options are changed offers additional improvement for 40\% of the sequences. Grec best is taken as the best over all 4 possible options (noLP/yesLP, d1/d2) (middle row). Additional gains in accuracy are observed for 25\% of the sequences when the search rectangle is expanded (bottom panel). The histograms are scaled so that 1.0 represents 100 sequences}\label{fig:nesting}
\end{figure}

Of the 1843 sequences for which Grec best (noLP, d2) is an improvement over both the MFE and centroid structures, toggling the options when partitioning Grec yields an improvement for 1592 sequences. Furthermore, we computed the best accuracy in the positive unskewed plane (so that the minimum penalty $a+3c > 0$), and we found that additional gains in accuracy can be gained for 1005 sequences. These results, presented in Figure~\ref{fig:nesting}, indicate that toggling some of the default options, and even expanding the parameter search space should be considered, when alternative branching structures are generated.

\section{Discussion}~\label{sec:discussion}

As we have demonstrated in Section~\ref{sec:results}, generating alternative configurations under branching reparametrization yields a valuable new set of structures which could improve the prediction accuracy. With the algorithms presented in Section~\ref{sec:algs}, this can now be done efficiently for longer sequences. But an important question remains: \emph{How do we find the needle in the haystack, i.e., how do we generate an accurate prediction?}

In preliminary considerations, we considered the ensembles of different branching structures for the RnaseP and SRP RNA sequences with median GC content: \textit{RNase P P.alcalifaciens} (301 nt, pairing density = 0.55, GC content = 0.58) and \textit{SRP Gibb.zeae.\_GSP-229533} (298nt, pairing density = 0.54,
GC content = 0.59). Their plane partitions are shown in~Figure~\ref{fig:example_partitions} (these were generated under the default command line options -- yesLP, d2). 

\begin{figure}
\centering

\includegraphics[width=1\textwidth]{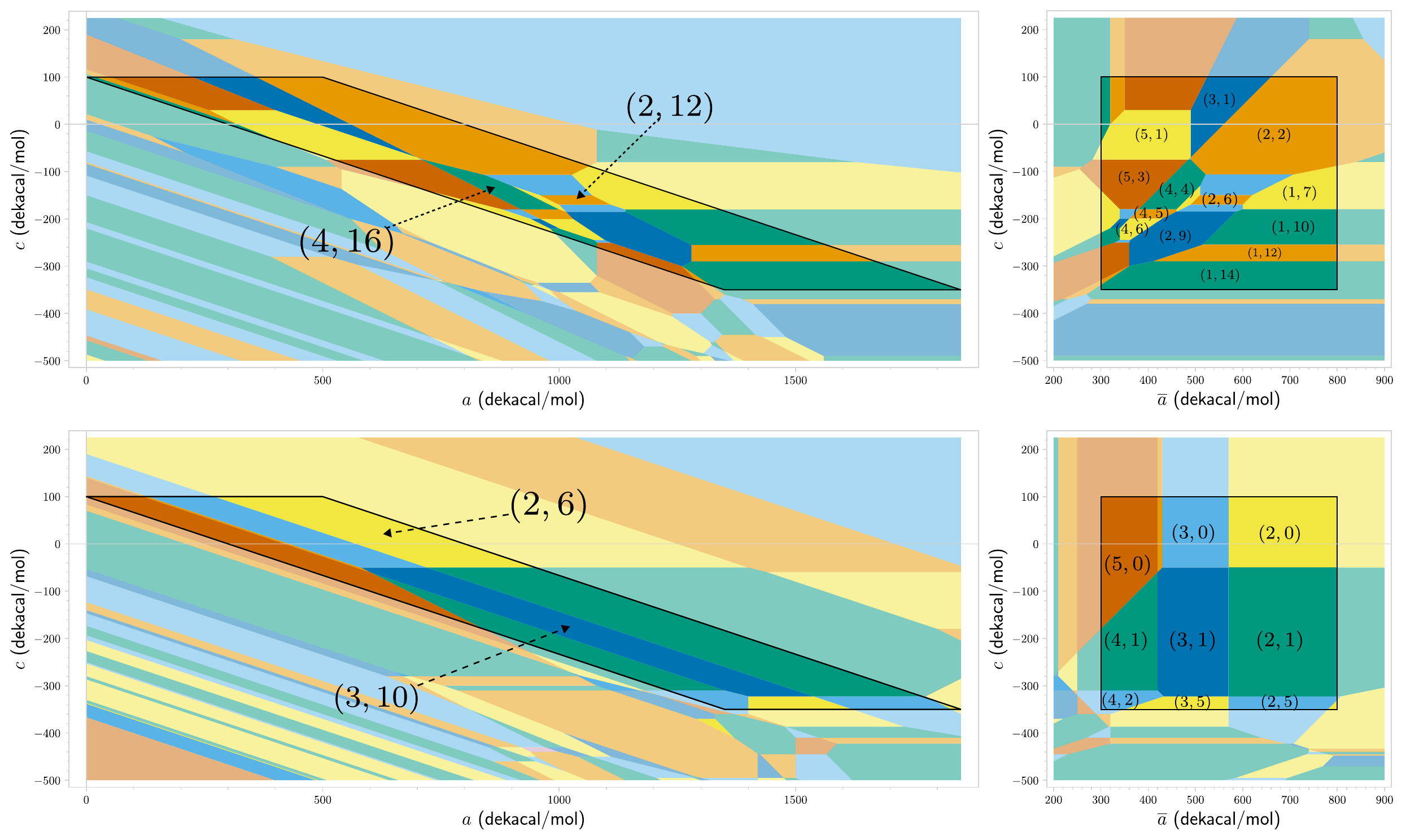}
\caption{The partitions of \textit{RNase P P.alcalifaciens} (top) and \textit{SRP Gibb.zeae.\_GSP-229533} (bottom) restricted to Grec. In the left panels, the regions are shown in the classical $(a,c)$ coordinates and are labeled by the $(x,z)$ components of the corresponding branching signature. On the right,  the regions are shown in the transformed $(\bar{a},c)$-plane ($\bar{a} = a + 3c$) and are labeled by the transformed signature components $(x, \bar{z})$, ($\bar{z}=z-3x$ is the excess branching)}\label{fig:example_partitions}
\end{figure}

We observed some challenges for finding/evaluating the best prediction, which we discuss below. We use $(i,j,k)$ to denote a stack of $k$ consecutive basepairs between $(i,j)$ and $(i+k-1,j-k+1)$. We refer to the structures by the transformed signature components $(x,\bar{z})$, as labeled in the right panels in Figure~\ref{fig:example_partitions}.

\textbf{RNase P}. The target structure has a pseudoknot $(22, 194,4)$ which cannot be predicted with NNTM optimization. Excluding these pairings, the target has size 83 and MFE prediction accuracy is 0.49 (Figure~\ref{fig:rnasep-struct}, top left).

\begin{figure}
\centering
\includegraphics[width=.7\linewidth]{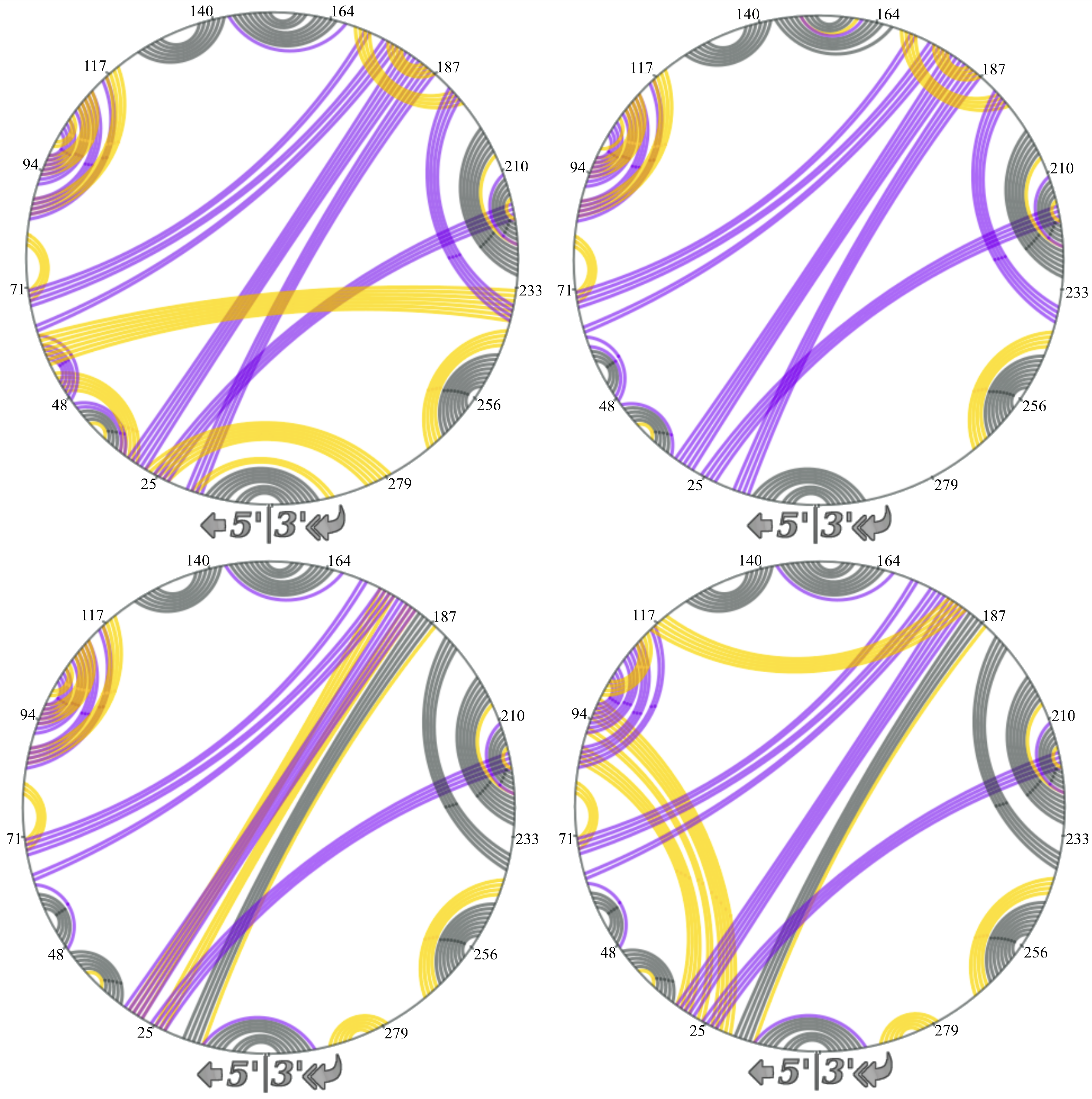}
\caption{\textit{Arc diagrams for RNase P P.alcalifaciens}. The target structure is shown alongside the MFE prediction (top left), the Boltzmann centroid (top right), and two alternative structures generated by Grec: structure $(2,6)$ (bottom left) and structure $(4,4)$ (bottom right). Shared pairs are colored gray, the target basepairs are purple and the prediction is gold.}\label{fig:rnasep-struct}
\end{figure}

The centroid prediction achieves an improved F1 accuracy of 0.58, attributed to an increase in true positive (TP) base pairs from 43 to 47 and a decrease in false positives (FP) from 49 to 29. While it accurately captures local base pairings, it fails to predict any long-range interactions, including $(190, 239, 4)$ (Figure~\ref{fig:rnasep-struct}, top right).

Generating conformations from Grec yields 18 alternative structures, with sizes ranging from 88 to 95 base pairs. Of these, 13 structures exhibit higher F1 accuracy than the MFE prediction, either by increasing the number of TPs or reducing FPs.

A closer examination reveals recurring patterns among these structures. Notably, four predictions from Grec contain 56 TPs—the highest observed in the ensemble. Three of these (structures $(4,4)$,$(4,5)$, and $(4,6)$) share 91 total base pairs, while the fourth one, structure $(2,6)$, shares 72 base pairs with them, including all 56 TPs.

Highest Grec accuracy is obtained by structure $(2,6)$ -- accuracy 0.647 -- but structure $(4,4)$ is essentially the same in F1 score  -- accuracy 0.640.

Despite these similarities, key differences emerge in their FP predictions (Figure~\ref{fig:rnasep-struct}, bottom). For instance, the target includes a 12-base-pair stem with three bulges in the region $[84,116]$. Structure $(2,6)$ predicts a similar stem in $[85,119]$, comprising 11 base pairs and two internal loops. Although no base pairs are exactly shared, this example highlights a limitation of traditional base-pair-level accuracy metrics in recognizing structures that are \emph{close} to the target. Similar discrepancies are seen in helices such as $(22,182,3)$ and $(27,177,4)$, which approximate the target helix $(27,182,5)$, but these similarities are not captured by direct base pair comparison. 

Both structure $(2,6)$ and structure $(4,4)$ correctly predict the helix (15,186,4) as (14,187,5). The remaining false negatives in these structures primarily consist of sparse, long-range interactions -- e.g., $(62,177,2)$, $(67,175,2)$, $(69,171,2)$ -- which are likely tertiary contacts and remain challenging to predict. Nevertheless, the overall prediction quality of structure $(4,4)$ is considerably lower than that of structure $(2,6)$.

This suggests that accuracy evaluations might benefit from considering a structural edit distance rather than strict pairwise matching. Simply expanding the definition of a ``true positive'' to include near matches~\cite{mathews-19}—such as $(i-1,j)$, $(i+1,j)$, $(i,j-1)$, or $(i,j+1)$—does not resolve cases like the region $[85,119]$. Additionally, considering a modified version of positive predictive value where predicted base-pairs which are neither true positives, contradicting or inconsistent with the target structure are labeled compatible and are considered neutral with respect to algorithm accuracy, is one way to adjust for the fact that NNTM could overpredict basepairs without significantly changing the structure~\cite{gardner2004comprehensive}.

\textbf{SRP RNA}. The target region spanning $[21, 46] \times [263, 290]$ is very sparsely paired with only 3 base pairs, which NNTM optimization can not predict accurately (Figure~\ref{fig:srp-struct}). It is unclear whether single-stranded bases in this region should be interpreted as a true signal of unpaired status or as evidence of an incomplete structure.

\begin{figure}
\centering
\includegraphics[width=.7\linewidth]{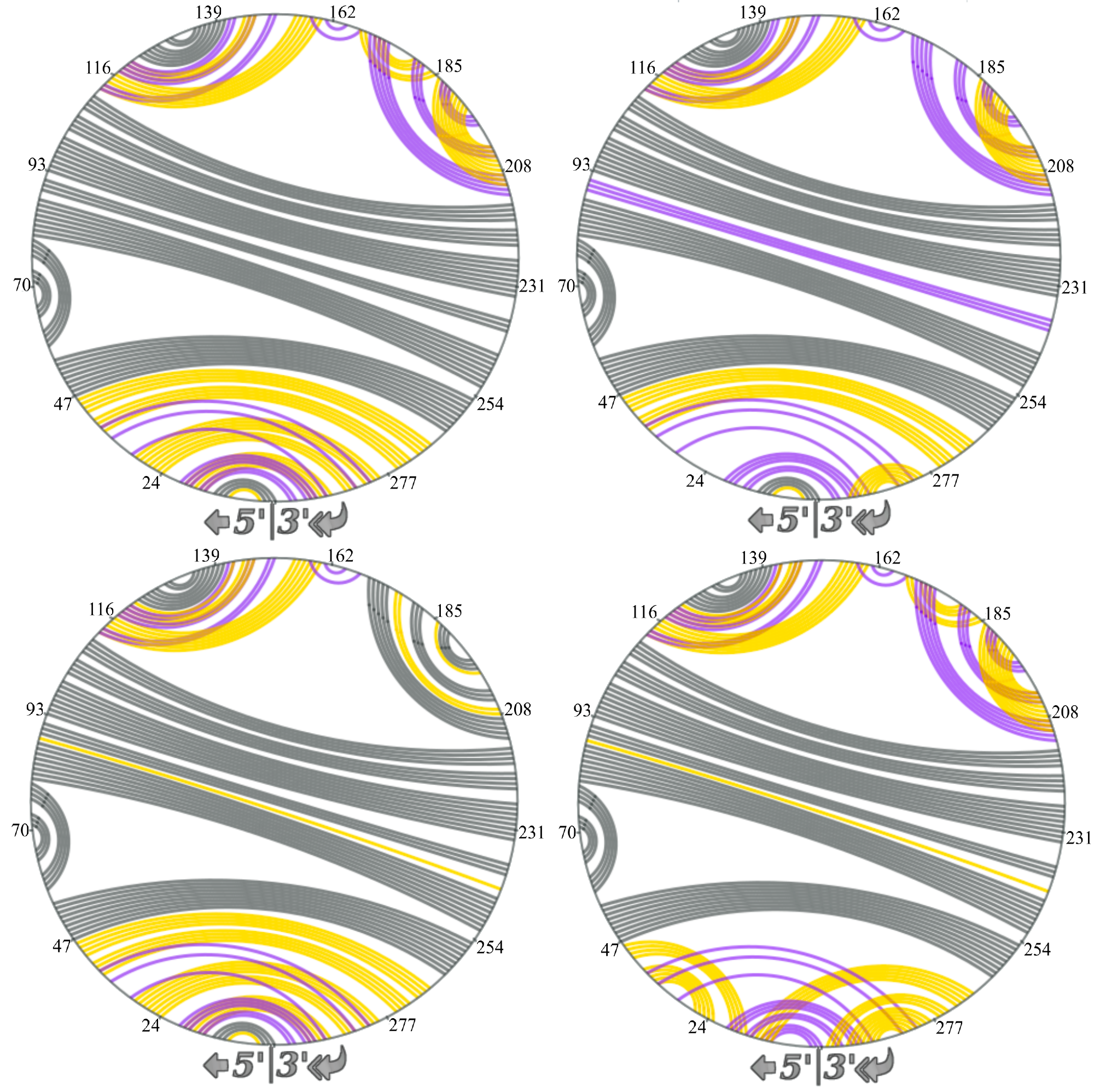}
\caption{\textit{Arc diagrams for SRP Gibb-zeae-GSP-229533m}. The target structure against the MFE prediction (top left), the Boltzmann centroid (top right) and two alternative structures from Grec: structure $(2,0)$ (bottom left) and structure $(3,1)$ (bottom right). Shared pairs are colored gray, the target basepairs are purple and the prediction is gold.}\label{fig:srp-struct}
\end{figure}

In contrast, the target structure contains 43 long-range base pairs densely distributed in the region $[47, 110] \times [214, 262]$. These are more consistently predicted: five Grec-generated structures (structures $(2,0)$, $(2,1)$, $(3,0)$, $(3,1)$, and $(4,0)$) capture them as exact helices, and three additional structures ($(4,1)$, $(4,2)$, and $(5,0)$) produce close approximations.

The F1 scores of the MFE and centroid structures are $0.60$ and $0.61$, repsectively (Figure~\ref{fig:srp-struct}, top). Among the 11 structures generated by Grec, four outperform the MFE in terms of accuracy. Notably, the predicted structures are highly modular, with considerable similarity among them—only structures $(1,5)$,  $(2,5)$, and  $(3,5)$ show significant divergence.

Structure $(2,0)$ achieves the highest accuracy, with an F1 score of 0.72. In contrast, structure $(3,1)$ has an F1 of only 0.56. However, without knowledge of the true structure, it is challenging to determine which of these alternative predictions is more reliable (Figure~\ref{fig:srp-struct}, bottom). Structure $(2,1)$ closely resembles structure $(3,1)$ in the $[167, 210]$ region but matches structure $(2,0)$ elsewhere. Structure $(3,0)$ is similar to structure $(3,1)$ except in $[168, 210]$, where it aligns with structure $(2,0)$. Structure $(2,0)$ and structure $(3,0)$ differ primarily in the $[1, 46] \times [263, 298]$ region.

These variations underscore the difficulty of pinpointing a single most plausible biological structure, while also offering a set of alternative conformations that may be valuable for developing experimentally testable hypotheses -- particularly if the structural differences are biologically meaningful or functionally relevant.

\section{Conclusions}
This work presents an efficient algorithm for generating alternative RNA secondary structures by reparameterizing the branching entropy function in the nearest-neighbor thermodynamic model (NNTM). The method combines minimum free energy (MFE) calculations with half-plane intersection techniques to partition the parameter space into regions, each corresponding to a distinct optimal structure. The algorithm operates in time linear to the number of nodes and partition regions, making it an optimal approach for recovering the full parameter partition while minimizing the number of costly MFE computations that dominate the overall runtime.

We applied our method to assess potential improvements in RNA secondary structure prediction using conformations from Grec compared to standard NNTM predictions, including the MFE and Boltzmann centroid structures. Results show that significant improvement is possible for all 10 families in the benchmarking dataset, indicating that this alternative structure set provides valuable candidates beyond those offered by standard approaches. Furthermore, improvements increase significantly for a large number of sequences if the options for dangling end model and allowing lonely basepairs are changed, or if the parameter search space is increased.

Despite these advancements, the question of how to produce a prediction from an ensemble of different branching configurations remains. The analysis of the ensemble of two sequences highlights the limitations of the NNTM to predict pseudoknots, sparsely base-paired regions, and sparse long-range pairings. Furthermore, it highlights the limitations of conventional accuracy metrics — such as base pair matching — in capturing biologically plausible but imperfect structural predictions. Some examples show modularity among the predicted structures, with clusters of predictions sharing structural elements, suggesting that the ensemble may be used in conjunction with additional structure data or for development of experimentally testable hypotheses.

The NNTM model used for the calculations incorporates linear penalties for multiloops, a choice made originally for computational efficiency. However, subsequent comparisons of prediction accuracy using nonlinear models—such as those based on logarithmic or polymer-theory-derived penalties—showed that the linear model performs best for generating structures~\cite{ward2017advanced}. For this reason, at the outset of this project, we decided it was worthwhile to explore alternative branching structures within the framework of the linear model. Our analysis of the obtained samples supports this approach. Nevertheless, re-evaluating under more sophisticated energy
models~\cite{liu2011fluorescence, zuber2024estimating} has been successfully applied in
other prediction methods that rely on large samples, and they could potentially help identify secondary structures in the samples produced by our algorithm.

The sample of structures generated by the algorithm depends on fixing the penalty $b$ for unpaired nucleotides in the branching loops. In the last two standard combinations of triples, T99~\cite{mathews1999expanded} and T04~\cite{mathews-etal-04}, this parameter was set to 0. In our previous work, which examined parameters for better predictions on a test set of tRNA and 5S rRNA sequences, we were led to investigate several combinations~\cite{polystats, bnb}. Across all the new parameter combinations considered, we found that the $b$ values clustered within a small range, roughly centered around 0. As a result, we concluded that it is reasonable to specialize the $b$ value and focus on the different branching structures based on the $a$ and $c$ trade-offs. However, we note that (1) specializing to a different $b$ value is possible and (2) the algorithm can be adapted to specialize a value of $a$ or a value of $c$ if desired. Expanding the search range in these ways could potentially improve accuracy, and exploring this would be interesting when an appropriate filtering of structures in the sample is developed.

Finally, we note that the mathematical background of our work is related to the work on the problem of identifying regions where certain optimal solutions are valid -- an important problem in mathematical programming, since in many applications, parameters like coefficients in the objective function vary over time or due to uncertainty (see e.g.~\cite{gal1972multiparametric, mavnas1968finding, matheiss1980survey, borrelli2003geometric}). So, while our focus here was developing algorithms for exploring RNA branching configurations which are compatible with existing MFE optimization tools, our algorithms might be useful in other settings as well. Namely, any linear function $f$ on a finite set of vectors with rational coordinates, induces a subdivision of the parameter space into convex sets which correspond to the different minima (or, alternatively, maxima) of $f$. Then one could use the algorithms we presented to recover that partition, provided that one has a consistent oracle which produces a minimum (respectively, a maximum) for $f$ for any choice of parameters from $\mathbb{Q}$ or $\mathbb{Z}$.

\newpage

%% BioMed_Central_Bib_Style_v1.01

%\bibliography{algreferences.bib}

\end{document}